\documentclass[11pt]{article}

\usepackage[margin=1in]{geometry}
\usepackage{amsmath,amssymb,amsthm}
\usepackage{algorithm}
\usepackage[noend]{algpseudocode}
\usepackage{booktabs}
\usepackage{hyperref}
\usepackage{xcolor}
\usepackage{enumitem}

\algnewcommand\algorithmicupon{\textbf{upon}}
\algdef{SE}[UPON]{Upon}{EndUpon}[1]{\algorithmicupon\ #1 \algorithmicdo}{\algorithmicend\ \algorithmicupon}

\newtheorem{theorem}{Theorem}
\newtheorem{lemma}[theorem]{Lemma}
\newtheorem{proposition}[theorem]{Proposition}

\theoremstyle{definition}
\newtheorem{definition}{Definition}
\theoremstyle{remark}
\newtheorem{remark}{Remark}

\newcommand{\Select}{\mathsf{Select}}
\newcommand{\Valid}{\mathsf{Valid}}
\newcommand{\Auth}{\mathsf{Auth}}
\newcommand{\ECHO}{\textsc{Echo}}
\newcommand{\READY}{\textsc{Ready}}
\newcommand{\FREEZE}{\textsc{Freeze}}
\newcommand{\VOTE}{\textsc{Vote}}
\newcommand{\PC}{\mathsf{PC}}
\newcommand{\RC}{\mathsf{RC}}
\newcommand{\Cert}{\mathsf{Cert}}
\newcommand{\nil}{\bot}
\newcommand{\nslot}{\mathsf{ns}}
\newcommand{\rslot}{\mathsf{rs}}

\title{Resource Locks and Fast Unlock for Shared Balances\\
in Certificate-Based Payment Systems}
\author{Lingkang Shangguan, \textit{Student Member, IEEE}\\
  The University of Sydney\\
  \texttt{lsha0379@uni.sydney.edu.au}}
\date{Version 1.0}

\begin{document}
\maketitle

\begin{abstract}
Certificate-based payment systems such as FastPay settle a payment with one round of
signatures from a quorum of validators and avoid global ordering. The client collects the
signatures, and validators never talk to each other. We observe that this client-driven design
is complete only as long as no lock is ever undone. Each account has a single writer, a
conflict can only come from that writer's own misbehavior, and the account stays locked as
the penalty. Stablecoin payments routinely need balances that several accounts debit: treasuries,
merchant pools, allowances and subscriptions. There, honest accounts race, locks must be
undone, and a quorum certificate held privately by a client can be silently overridden. We
prove that no unlock rule is safe in this setting unless certificate formation is visible
to validators.

We present a hybrid protocol. Personal accounts keep the client-driven FastPay path
unchanged. A shared resource is a balance with its own version number and a set of
authorized accounts. Transactions on shared resources are submitted to any validator and
certified by a two-phase broadcast, $\ECHO$ then $\READY$. The admission check requires both the
sender's nonce and the resource's nonce to be free. Without contention, this path needs no
consensus. When concurrent debits split the votes in any proportion, validators detect the
split themselves, freeze the affected resource version, and resolve it with one agreement
instance scoped to that version. Losing transactions return their sender nonce. We prove
consistency, no overdraft, validity, finality and liveness. The protocol composes with the
PDCC reconfiguration protocol.
\end{abstract}

\section{Introduction}

Payment systems built on Byzantine consistent broadcast~\cite{cachin2011,fastpay,astro}
achieve sub-second settlement and horizontal scalability because validators never agree on a
global order. In FastPay~\cite{fastpay}, a client sends its signed transfer to all validators,
collects $2f+1$ signatures into a certificate, and broadcasts the certificate. Validators do
not communicate with each other, so each payment costs $O(n)$ messages.

The design rests on two facts. First, each account has a single writer, and the account nonce
orders that writer's transfers. Asset transfer with single-owner accounts has consensus number
one~\cite{guerraoui2019}. Second, an honest validator signs at most one transfer per nonce and never
revokes that signature. Together these make a certificate final the moment it exists, even if
only the client holds it, because no conflicting certificate can ever be formed. If the owner
equivocates, the account stays locked, and only the owner is harmed. Delegating
signature collection to the client is therefore safe, and it is what makes the system fast.

\paragraph{When client-driven certification breaks.}
Stablecoin workloads need balances debited by several accounts: a corporate treasury with
several payers, a merchant pool debited by refunds and payouts, delegated allowances and
subscriptions, and custodial wallets operated by many signing workers. Guerraoui et
al.~\cite{guerraoui2019} show that an account shared by $k$ owners has consensus number $k$.
Concretely, two honest accounts can debit the same balance at the same time and split the
validators' votes, so that neither reaches a quorum and the balance is stuck. This is not a
penalty anyone deserves. The lock must be undone.

Undoing a lock is exactly what the client-driven design cannot support. Before releasing a
stuck balance, validators must know that no transaction on it has already been certified. In
a client-driven design a certificate may exist only in a client's hands. A Byzantine validator
can sign one transaction privately and another publicly, so the validators' joint view can
show a split while a certificate exists. We prove (Proposition~\ref{prop:impossible}) that
no unlock rule over validators' reports is safe in this setting. This holds even if
validators broadcast their votes to each other. The protocol, not the client, has to record
that a certificate was formed.

\paragraph{Our design.}
We keep the fast path where it is safe and add visibility only where locks may be undone.
\begin{itemize}[noitemsep]
  \item \emph{Personal path.} A transaction that debits the sender's own account runs the
        FastPay protocol unchanged: client-driven, $O(n)$ messages, no unlock.
  \item \emph{Shared path.} A \emph{shared resource} has a balance, its own version number (the
        resource nonce) and a set of authorized accounts. A transaction debiting it can be
        submitted to any validator. A validator admits it only if the sender's nonce and the
        resource nonce are both free, and then broadcasts an $\ECHO$. A validator that has
        seen $2f+1$ echoes for a transaction stores them and broadcasts $\READY$, and
        $2f+1$ readies make the transaction final. This is Bracha's two-phase
        broadcast~\cite{bracha1987} applied per resource version, with no consensus.
  \item \emph{Fast unlock.} If a resource version is not resolved in time, validators freeze
        it, report what they echoed and stored, and run one validated agreement
        instance~\cite{cachin2001,vaba} for that version. The selection rule admits only a
        transaction whose echo quorum someone stored, which includes every final transaction.
        Losing transactions are dead, and their sender nonces are released for reuse.
\end{itemize}

\paragraph{Contributions.}
\begin{enumerate}[noitemsep]
  \item A characterization of when client-driven certification is complete (single writer,
        no unlock), and a proof that no unlock rule is safe without visible certification
        (\S\ref{sec:completeness}).
  \item A hybrid protocol with dual-nonce admission: personal transactions on the FastPay
        path, shared transactions on an $\ECHO$/$\READY$ path (\S\ref{sec:protocol}).
  \item Fast unlock: in-protocol detection, freeze and per-version agreement, with nonce
        release for losing transactions (\S\ref{sec:unlock}).
  \item Proofs of consistency, no overdraft, validity, finality and liveness
        (\S\ref{sec:proofs}).
  \item Integration with PDCC reconfiguration~\cite{pdcc} (\S\ref{sec:reconfig}).
  \item An open-source TLA$^+$ model, exhaustively checked with TLC, which also reproduces the
        counterexample of the impossibility result (\S\ref{sec:eval}).
\end{enumerate}

\section{Background}
\label{sec:background}

\paragraph{Certificate-based payments.}
FastPay~\cite{fastpay} and its successors settle each payment by Byzantine consistent
broadcast on a per-account basis; credits commute and need no ordering. PDCC~\cite{pdcc}
describes the resulting architecture as per-account certificate payments combined with a
periodic settlement consensus that commits checkpoints and reconfigurations.

\paragraph{Stuck accounts.}
An equivocating sender can split votes on its own nonce. FastPay leaves the account
stuck. Sui~\cite{sui} keeps such owned objects locked until the end of the epoch.

\paragraph{Reliable broadcast.}
Bracha's broadcast~\cite{bracha1987} includes an additional $\READY$ all-to-all phase after the echo phase to ensure that if an honest party delivers the message, then a quorum has seen a quorum of echo messages. Astro~\cite{astro} applies Byzantine reliable broadcast to build payment systems.

\paragraph{Validated agreement.}
In validated Byzantine agreement~\cite{cachin2001} all honest validators decide on a common value satisfying a predicate $\Valid$; PBFT-style protocols~\cite{pbft,hotstuff} implement this in the partial synchrony model with a transferable decision proof and randomized protocols~\cite{vaba} in the asynchronous model.

\section{Model}
\label{sec:model}

\paragraph{Validators, faults, network.}
In a configuration, a committee of $n = 3f+1$ validators runs the protocol; at most $f$ are
Byzantine. A \emph{quorum} is any $2f+1$ validators; two quorums share at least one honest
validator, and a quorum contains at least $f+1$ honest validators. Accounts and clients may be
Byzantine and collude with Byzantine validators. Signatures are unforgeable. Safety holds
under asynchrony. Liveness assumes partial synchrony~\cite{dls88}: after an unknown GST,
messages between honest parties arrive within $\Delta$. Configurations change through PDCC
(\S\ref{sec:reconfig}).

\paragraph{Accounts and shared resources.}
An account $a$ has a personal balance and a nonce. A \emph{shared resource} $\rho$ has a
balance, a version (the \emph{resource nonce}) and an authorization set $\Auth(\rho)$ of
accounts allowed to debit it, with $|\Auth(\rho)| \ge 2$ in practice.

\paragraph{Slots.}
A \emph{nonce slot} consists of a pair $(a, i)$ and a \emph{resource slot} consists of a pair $(\rho, v)$. A transaction consumes one sender nonce slot and, if it debits a shared resource, also one resource slot.

\paragraph{Transactions.}
A transaction $T$ is signed by its sender $a$ and carries a nonce $i$, so
$\nslot(T) = (a, i)$. It is of one of two kinds.
\begin{itemize}[noitemsep]
  \item A \emph{personal} transaction debits $a$'s personal balance.
  \item A \emph{shared} transaction debits a shared resource $\rho$ at version $v$, requires
        $a \in \Auth(\rho)$, and has $\rslot(T) = (\rho, v)$.
\end{itemize}
Both kinds may credit any accounts or resources. Credits are not versioned.

\paragraph{Execution order.}
A validator executes the transactions of account $a$ in nonce order and the transactions of
resource $\rho$ in version order. A nonce slot is \emph{consumed} when a transaction on it is
executed. A resource slot is \emph{resolved} when a transaction on it is executed or the slot
is released; resolving $(\rho, v)$ opens $(\rho, v+1)$.

\paragraph{Properties.}
\begin{description}[noitemsep]
  \item[Consistency.] No two executed transactions share a slot. All honest validators
        resolve every resource slot the same way.
  \item[No overdraft.] No balance in an honest validator's state becomes negative.
  \item[Validity.] An executed transaction is signed by its sender, the sender is authorized
        for the resource it debits, and an honest validator found the debit covered.
  \item[Finality.] A final transaction is eventually executed by every honest validator, and
        no transaction conflicting with it is ever executed.
  \item[Liveness.] After GST: a personal transaction from an honest account is eventually final;
        every resource slot on which an honest validator saw an echo is eventually
        resolved; and an honest account whose shared transaction loses can reuse its nonce.
\end{description}

\section{When Client-Driven Certification Is Complete}
\label{sec:completeness}

In client-driven certification, a transaction is final when a quorum of validators has signed
it. Validators send their signatures to whoever collects them and need not see each other's.

\begin{lemma}
\label{lem:fastpay}
If honest validators sign at most one transaction per slot and never release a slot, then
client-driven certification is safe even if certificates are known only to clients.
\end{lemma}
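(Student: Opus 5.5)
We will argue by quorum intersection, treating ``safe'' as the statement that no two conflicting transactions (two distinct transactions on the same slot) can both be certified, and that a certified transaction can never be overridden. Suppose, for contradiction, that two distinct transactions $T \neq T'$ on the same slot $s$ each obtain a certificate, that is, signatures from quorums $Q$ and $Q'$ of $2f+1$ validators each. By the quorum property in \S\ref{sec:model}, $|Q \cap Q'| \ge f+1$, so $Q \cap Q'$ contains at least one honest validator $h$. By unforgeability of signatures, $h$ really signed both $T$ and $T'$ for slot $s$. This contradicts the hypothesis that honest validators sign at most one transaction per slot. Hence at most one transaction per slot is ever certified. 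The argument never asks who holds the certificates, so it applies when a certificate exists only in a client's hands.

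Next we turn this into finality. Fix a certified $T$ on slot $s$, possibly known only to its client. We must show that no honest validator ever executes a conflicting $T'$ on $s$. In client-driven certification an honest validator executes a transaction only upon receiving a certificate for it. By the first step no certificate for $T'$ exists, so $T'$ is never executed. We then use the second hypothesis: because honest validators never release a slot, there is no step by which the honest signatures on $T$ could become void, or by which a slot could be reopened so that validators sign afresh. The signature sets are therefore monotone over time, and the argument of the first step holds at every moment of the execution, not only at the moment $T$ was certified. Whenever the holder later broadcasts the certificate, honest validators accept and execute $T$ in nonce order, since no conflicting transaction can already occupy $s$.

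The main obstacle is making the role of ``never release'' explicit, because that hypothesis is exactly what Proposition~\ref{prop:impossible} will show cannot be dropped. We plan to phrase the invariant as follows: for every slot $s$, each honest validator's set of signed transactions on $s$ has size at most one and never changes once it is nonempty. Quorum intersection then applies across all times, not just within one instant. We will note in a remark that with a release step the invariant fails. An honest validator in $Q$ could release $s$ and later sign $T'$, so $Q \cap Q'$ would no longer certify a double signature. A hidden certificate for $T$ could then coexist with a certificate for $T'$.
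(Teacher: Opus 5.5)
Your proof is correct and matches the paper's argument: two certificates for conflicting transactions on one slot come from two quorums, and their intersection contains an honest validator that would have signed both, which the hypothesis forbids. Your extra steps are elaborations of what the paper's one-line proof leaves implicit, not a different route. These are the explicit use of the never-release hypothesis (signature sets are monotone, so the intersection argument holds at every time) and the paragraph on finality.
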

\begin{proof}
Two certificates for conflicting transactions would come from two quorums, which share an
honest validator that signed both on the same slot.
\end{proof}

Note that this guarantee depends on never releasing a slot. If a slot is at some point released and re-signed, it becomes possible to contradict a certificate created but never shown to validators.

\begin{proposition}
\label{prop:impossible}
Let $n=3f+1$, $f\ge1$, and let a transaction be final once a quorum has signed it, even if
signatures are additionally broadcast among validators. Then no function $\Select$ that maps the
reports of a quorum of validators on a slot (each stating which transaction the reporter
signed) to a transaction or $\nil$ can ever release or reassign the slot without
contradicting some final transaction.
\end{proposition}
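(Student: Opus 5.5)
The argument is an indistinguishability construction. Fix a slot and any rule $\Select$. We build two executions that give the same reports from the same quorum $Q$. In one execution some transaction is already final, so releasing or reassigning the slot contradicts it. In the other, the split is genuine. Whatever $\Select$ outputs on the common reports therefore either releases or reassigns the slot against a final transaction in one execution, or never releases or reassigns the slot in any case that looks like a split. The second option means $\Select$ never unlocks at all, which is the claim.

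\emph{Step 1: the execution with a hidden certificate.} Partition the validators into an honest set $A$ of size $f$, an honest set $B$ of size $f+1$, and a Byzantine set $F$ of size $f$, with two conflicting transactions $T_1,T_2$ on the slot. Honest validators in $A$ sign $T_1$, those in $B$ sign $T_2$. The Byzantine validators sign $T_1$ and hand these signatures only to a colluding client. That client now holds $f+f=2f<2f+1$ signatures, so one more is needed. Adjust the partition: take $|A|=f+1$, $|B|=f$, and $|F|=f$. Then $A\cup F$ is a quorum that signed $T_1$, and $T_1$ is final with the certificate held only by the client.

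Now let the reporting quorum be $Q=B\cup F\cup\{a\}$ for one $a\in A$. Honest $B$ reports $T_2$, $a$ reports $T_1$, and the Byzantine $F$ lie and report $T_2$. The reports show $2f$ votes for $T_2$ and one for $T_1$.

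\emph{Step 2: the split execution.} Here all of $F$ honestly signs $T_2$, $A$ signs $T_1$, and $B$ signs $T_2$. No quorum exists for either transaction, since $T_1$ has $f+1$ signatures and $T_2$ has $2f$. The same $Q$ produces exactly the same reports as in Step 1.

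\emph{Step 3: the case split on $\Select$.} Broadcasting signatures among validators does not help. In Step 1 the Byzantine validators send their $T_1$ signatures only to the client, and they broadcast fake $T_2$ signatures to validators. Every honest validator's view therefore coincides across the two executions, except for the private client channel. Any output of $\Select$ other than $T_1$ contradicts the final $T_1$ in Step 1: $\nil$ releases the slot, and $T_2$ reassigns it. Outputting $T_1$ is not a fix. By symmetry, swapping which transaction has the hidden certificate yields the same report vector with $T_2$ hidden-final, so $T_1$ is also wrong on that vector.

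\emph{The main obstacle.} The delicate point is making the symmetric swap give the identical report multiset; this may need choosing $Q$ or the lying pattern so that the reports are balanced. The simplest choice is to make $Q$ report $f$ votes for $T_1$, $f$ votes for $T_2$, and one $\nil$ or a tie. Every output of $\Select$ is then refuted by one of three executions: hidden $T_1$, hidden $T_2$, or a genuine split where no certificate exists. In the genuine split, outputting $T_i$ is harmless, so that execution is not the one doing the refuting. What rules out $T_1$ and $T_2$ are the two hidden-certificate executions, and $\nil$ is ruled out by either of them.
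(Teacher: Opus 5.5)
There is a genuine gap: you never exhibit an execution that rules out $\Select$ outputting $T_1$ on your report vector.

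\begin{itemize}
\item Step~1 excludes $\nil$ and $T_2$. Step~2 is a genuine split with no final transaction, so, as you concede, it excludes nothing.
\item Your opening dichotomy therefore does not go through. Outputting $T_1$ on a split-looking vector is itself a resolution of the split, so it must be refuted by a second execution in which a different transaction is final.
\item The symmetry you invoke for this does not hold as stated. Exchanging the roles of $T_1$ and $T_2$ in Step~1 produces the mirrored vector ($2f$ reports of $T_1$, one of $T_2$), not the same one.
\item The balanced alternative in your last paragraph is only asserted; neither hidden-certificate execution is constructed.
\item Your argument that broadcasting does not help is made for the Step~1/Step~2 pair, which yields no contradiction. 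It has to be redone for the pair that does.
\end{itemize}

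The repair is a one-validator change to Step~2. Pick $a' \in A\setminus\{a\}$, which is nonempty since $f\ge 1$. Let $a'$ sign $T_2$ instead of $T_1$, let $F$ sign $T_2$, and use asynchrony to delay $a'$'s messages to $Q$ until after the reports. Then $B\cup F\cup\{a'\}$ is a quorum of size $2f+1$ certifying $T_2$. Meanwhile $Q$ issues the same reports and has received the same broadcasts as in Step~1. So $\Select$ would have to output $T_1$ in one execution and $T_2$ in the other.

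This is exactly the paper's proof. For $f=1$ it is your Step~1 with $a=h_1$, $a'=h_2$, $B=\{h_3\}$ and $F=\{b\}$; the paper's remark that every broadcast $Q$ received is the same in both executions is where the delay of $h_2$'s messages enters.

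Alternatively, you can avoid the timing argument altogether. Let $F$ sign only $T_2$ and make $a'$ the faulty validator in the second execution. It broadcasts its $T_1$ signature as in Step~1 but privately hands $T_2$'s client a $T_2$ signature. Every message and report seen by validators is then literally identical across the two executions. This is the hidden-$T_2$ execution with the same report vector that you were looking for, and no balanced vector is needed.
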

\begin{proof}
Let $f=1$ with honest $h_1,h_2,h_3$ and Byzantine $b$, and conflicting $A,B$ on one slot. In
execution $X$, $h_1,h_2$ sign $A$ and $h_3$ signs $B$. $b$ broadcasts a signature for $B$ and
privately gives the client of $A$ a signature for $A$, so $\{h_1,h_2,b\}$ certifies $A$. In
execution $Y$, $h_1$ signs $A$, $h_2,h_3$ sign $B$, and $b$ signs $B$, so $\{h_2,h_3,b\}$ certifies $B$.
The quorum $\{h_1,h_3,b\}$ reports $A,B,B$ in both executions, and every broadcast it
received is the same in both. $\Select$ must output $A$ in $X$ and $B$ in $Y$, a
contradiction. For general $f$, let $h_1$, $h_3$ and $b$ each stand for a group of $f$
validators and $h_2$ for one validator.
\end{proof}

Personal transactions are never released, so Lemma~\ref{lem:fastpay} covers them. Shared
transactions must sometimes be released, so Proposition~\ref{prop:impossible} applies to them
and they need visible certification.

\section{Protocol}
\label{sec:protocol}

\paragraph{Validator state.}
For each nonce slot $\sigma$, validator $p$ stores $\mathsf{lock}_p[\sigma]$, the transaction
that currently holds the slot at $p$, or $\nil$. For each resource slot $s$ it stores
$\mathsf{echoed}_p[s]$, $\mathsf{pc}_p[s]$ (a stored echo quorum or $\nil$),
$\mathsf{frozen}_p[s]$, and the resolution. It also stores the set $\mathsf{dead}_p$ of transactions known
to have lost (\S\ref{sec:unlock}).

\paragraph{Admission.}
Validator $p$ \emph{admits} transaction $T$ if all of the following hold:
\begin{enumerate}[noitemsep]
  \item $T$ is signed by its sender $a$, and $a \in \Auth(\rho)$ if $T$ debits $\rho$;
  \item $T \notin \mathsf{dead}_p$;
  \item \emph{user nonce:} nonce $i-1$ of $a$ is consumed at $p$, and
        $\mathsf{lock}_p[(a,i)] \in \{\nil, T\}$;
  \item \emph{resource nonce} (shared only): $(\rho, v-1)$ is resolved at $p$, and both
        $\mathsf{echoed}_p[(\rho,v)] = \nil$ and $\neg\mathsf{frozen}_p[(\rho,v)]$ hold;
  \item the debited balance in $p$'s view covers the amount and fees.
\end{enumerate}
On admission, $p$ sets $\mathsf{lock}_p[(a,i)] \gets T$ and, for a shared transaction,
$\mathsf{echoed}_p[(\rho,v)] \gets T$, in one atomic step.

\subsection{Personal path}

The personal path is FastPay. The client sends $T$ to all validators; each validator that
admits $T$ returns $\langle\VOTE,T\rangle$; a quorum of votes is a certificate $\Cert(T)$,
which is final; validators execute $T$ on receiving $\Cert(T)$. Nonce locks taken by personal
transactions are never released.

\subsection{Shared path}

\begin{enumerate}[noitemsep]
  \item \textbf{Submit.} The client sends $T$ to any validator. If it does not see $T$ become
        final within a timeout, it sends $T$ to another validator.
  \item \textbf{Echo.} A validator that receives $T$ (from a client or inside an $\ECHO$)
        and admits it broadcasts $\langle\ECHO, T\rangle_p$ to all validators, including $T$.
  \item \textbf{Ready.} A validator $p$ that holds $\ECHO$s for $T$ from a quorum, with
        $\neg\mathsf{frozen}_p[\rslot(T)]$, stores the quorum as
        $\mathsf{pc}_p[\rslot(T)] \gets \PC(T)$ and broadcasts $\langle\READY, T\rangle_p$.
  \item \textbf{Final.} $\READY$s for $T$ from a quorum form $\RC(T)$, and $T$ is final.
        Validators also send $\READY$s to the client and recipients, who can verify $\RC(T)$.
        A validator executes $T$ on holding $\RC(T)$.
\end{enumerate}
A transaction on the shared path is therefore final after three message delays: submit, echo,
ready. It involves no leader and no consensus.

\begin{definition}[Final]
A personal transaction is final when $\Cert(T)$ exists. A shared transaction is final when
$\RC(T)$ exists.
\end{definition}

\section{Fast Unlock}
\label{sec:unlock}

\paragraph{Triggers.}
Unlocking is triggered inside the protocol. Validator $p$ freezes resource slot $s$ when any of
the following holds and $s$ is unresolved:
\begin{enumerate}[noitemsep]
  \item \emph{Timeout:} $\tau$ has elapsed since $p$ first saw an $\ECHO$ on $s$. A validator
        forwards the first $\ECHO$ it sees on each slot, so every honest validator starts the
        timer.
  \item \emph{Split:} $p$ holds $\ECHO$s on $s$ from a quorum of distinct validators and no
        transaction among them has a quorum; it may wait a short grace period first.
  \item \emph{Amplification:} $p$ received $\FREEZE$ reports on $s$ from $f+1$ validators.
\end{enumerate}
Triggers affect only liveness and latency. A Byzantine validator cannot freeze a slot at an
honest validator alone, because amplification needs $f+1$ reports.

\paragraph{Freeze.}
To freeze $s$, $p$ sets $\mathsf{frozen}_p[s] \gets \mathit{true}$ and broadcasts
\[
  \langle\FREEZE, s, \mathsf{echoed}_p[s], \mathsf{pc}_p[s]\rangle_p .
\]
Afterwards $p$ neither echoes on $s$ nor sends $\READY$ for any transaction on $s$. Freezing is
irrevocable.

\paragraph{Selection.}
A \emph{freeze set} $F$ for $s$ consists of well-formed $\FREEZE$ reports on $s$ from a quorum of
distinct validators.
\[
  \Select(F) =
  \begin{cases}
    T & \text{if some report in } F \text{ contains a valid } \PC(T),\\
    \nil & \text{otherwise.}
  \end{cases}
\]

\paragraph{Agreement.}
Validator $p$ proposes $(F_p, \Select(F_p))$ to the agreement instance $\mathsf{VBA}_s$ with
predicate $\Valid(F,d) \iff F$ is a freeze set for $s$ and $d = \Select(F)$. Any validated
agreement with a transferable decision proof $\pi$ works. In a PDCC system, a
dedicated instance of its leaderless PBFT-style settlement protocol serves, since any validator may
propose. Pending unlocks can also be batched into the next settlement round, at the cost of
waiting for it.

\paragraph{Resolution.}
On deciding $(F^*, d^*)$ with proof $\pi$, validators forward $\pi$ and proceed as follows.
\begin{itemize}[noitemsep]
  \item If $d^* = T$, they execute $T$. It consumes $T$'s nonce slot and resolves $s$.
  \item If $d^* = \nil$, they release $s$: the balance is unchanged and $(\rho, v+1)$ opens.
  \item Every transaction $T' \ne d^*$ on $s$ is \emph{dead}. Each validator adds it to
        $\mathsf{dead}$ and, if $\mathsf{lock}[\nslot(T')] = T'$, resets the nonce lock to
        $\nil$. The sender can reuse nonce $i$ for a new transaction.
\end{itemize}

\begin{algorithm}[t]
\caption{Shared path and fast unlock at validator $p$}
\label{alg:shared}
\begin{algorithmic}[1]
\Upon{receiving $T$ or $\langle\ECHO,T\rangle$ with $T$ admissible}
  \State take locks on $\nslot(T)$ and $\rslot(T)$; broadcast $\langle\ECHO,T\rangle_p$
\EndUpon
\Upon{$\ECHO$s for $T$ from a quorum and $\neg\mathsf{frozen}_p[\rslot(T)]$}
  \State $\mathsf{pc}_p[\rslot(T)] \gets \PC(T)$; broadcast $\langle\READY,T\rangle_p$
\EndUpon
\Upon{$\READY$s for $T$ from a quorum}
  \State execute $T$
\EndUpon
\Upon{a trigger for unresolved $s$}
  \State $\mathsf{frozen}_p[s] \gets \mathit{true}$; broadcast $\langle\FREEZE, s, \mathsf{echoed}_p[s], \mathsf{pc}_p[s]\rangle_p$
\EndUpon
\Upon{a freeze set $F_p$ for $s$}
  \State propose $(F_p, \Select(F_p))$ to $\mathsf{VBA}_s$
\EndUpon
\Upon{$\mathsf{VBA}_s$ decides $(F^*, d^*)$ with proof $\pi$}
  \State forward $\pi$; execute $d^*$ or release $s$; mark the others dead and free their nonce locks
\EndUpon
\end{algorithmic}
\end{algorithm}

\section{Correctness}
\label{sec:proofs}

We say a transaction has a \emph{quorum} if a quorum of validators admitted it, that is, if
$\Cert(T)$ or $\PC(T)$ exists.

\begin{lemma}[One echo per resource slot]
\label{lem:oneecho}
An honest validator echoes at most one transaction on each resource slot, and none after
freezing it.
\end{lemma}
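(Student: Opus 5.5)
The proof is a direct inspection of the protocol code: the only place an honest validator emits $\langle\ECHO,T\rangle_p$ is the first handler of Algorithm~\ref{alg:shared}. That handler fires only when $T$ is admissible, and admission condition~4 (resource nonce) is therefore a precondition of every echo on a resource slot $s=\rslot(T)$. The plan is to read off both claims from condition~4 together with the atomic state update that accompanies admission.

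\emph{At most one transaction per slot.} Condition~4 requires $\mathsf{echoed}_p[s]=\nil$ at the moment of admission. Admission then sets $\mathsf{echoed}_p[s]\gets T$ in the same atomic step as taking the nonce lock and broadcasting the echo.

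Next I would check that $\mathsf{echoed}_p[s]$ is never reset to $\nil$. The only state resets in the protocol occur in the resolution step, and they touch $\mathsf{lock}[\nslot(T')]$ only, not $\mathsf{echoed}$. Moreover, after resolution $s$ is resolved, and admission on a later version $(\rho,v+1)$ refers to a different slot.

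Hence after the first echo on $s$, condition~4 fails for every transaction $T'$ on $s$. This includes $T'\ne T$, and also a repeated $T$ arriving inside another validator's $\ECHO$, which at most re-broadcasts the same $T$ and is harmless. So $p$ echoes at most one transaction on $s$. Atomicity is what rules out two concurrent admissions both observing $\nil$; I would cite the ``one atomic step'' clause explicitly for this.

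\emph{None after freezing.} Condition~4 also requires $\neg\mathsf{frozen}_p[s]$. Freezing sets $\mathsf{frozen}_p[s]\gets\mathit{true}$, and the paper states freezing is irrevocable: no handler ever sets it back to false. So once frozen, no transaction on $s$ is admissible at $p$, and the echo handler cannot fire for $s$. This also matches the explicit sentence in the Freeze paragraph.

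The only subtle point, and the one I expect to need care, is the claim that no code path resets $\mathsf{echoed}_p[s]$ or $\mathsf{frozen}_p[s]$, in particular through resolution or through the dead-marking that releases nonce locks. Releasing a nonce lock lets the sender reuse nonce $i$, but any new transaction must target an open resource slot. Slot $s$ itself remains resolved, so condition~4 (via ``$(\rho,v-1)$ resolved'' applied to the new version, and the unchanged $\mathsf{echoed}_p[s]$ for $s$) still blocks echoes on $s$.
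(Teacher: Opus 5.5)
Your proposal is correct and matches the paper's proof: admission condition~4 requires $\mathsf{echoed}_p[s]=\nil$ and $\neg\mathsf{frozen}_p[s]$, admission sets $\mathsf{echoed}_p[s]$, which is never reset, and freezing is irrevocable. Your extra checks (atomicity, and that resolution and dead-marking reset only nonce locks) make explicit what the paper asserts in one line; note that condition~4 also blocks re-admitting the same $T$, since $\mathsf{echoed}_p[s]=T\neq\nil$, so $p$ never re-echoes $T$ at all.
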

\begin{proof}
$\mathsf{echoed}_p[s]$ is set on admission and never reset, and admission requires
$\neg\mathsf{frozen}_p[s]$, which is irrevocable.
\end{proof}

\begin{lemma}[Unique echo quorum]
\label{lem:uniquepc}
At most one transaction per resource slot has an echo quorum.
\end{lemma}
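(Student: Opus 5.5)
The plan is a standard quorum-intersection argument built on Lemma~\ref{lem:oneecho}. Suppose, for contradiction, that two distinct transactions $T \ne T'$ on the same resource slot $s = (\rho, v)$ both have echo quorums. That means sets $Q$ and $Q'$ of $2f+1$ validators exist such that every member of $Q$ sent $\langle\ECHO, T\rangle$ and every member of $Q'$ sent $\langle\ECHO, T'\rangle$, each message properly signed.

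The first step is to justify that these quorums really consist of validators that echoed. Because signatures are unforgeable, an echo attributed to an honest validator $p$ in $\PC(T)$ was actually broadcast by $p$. A Byzantine validator may sign echoes for both $T$ and $T'$, but an honest one cannot have its echo fabricated. So whenever an honest validator appears in a stored echo quorum for $T$, it truly echoed $T$.

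The second step applies the model's intersection fact: any two quorums of size $2f+1$ among $n = 3f+1$ validators overlap in at least $f+1$ validators, so at least one honest validator $h \in Q \cap Q'$. By the first step, $h$ echoed both $T$ and $T'$ on $s$. Lemma~\ref{lem:oneecho} states that an honest validator echoes at most one transaction per resource slot, which gives the contradiction. I would add one line explaining why Lemma~\ref{lem:oneecho} applies even though $h$ may have handled $T$ and $T'$ at different times: $\mathsf{echoed}_h[s]$ is never reset, not even at resolution, since resolution only resets nonce locks and adds transactions to $\mathsf{dead}$. Hence $h$'s second admission would fail clause~4 of admission.

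I expect no real obstacle; the only delicate point is the interpretation of ``has an echo quorum.'' It must mean that $2f+1$ distinct validators signed echoes, which corresponds to a valid $\PC(T)$. It must not mean merely that some party claims such a set, and distinctness of signers is needed for the intersection bound.
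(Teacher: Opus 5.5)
Your proposal is correct and follows the paper's proof: two echo quorums on $s$ intersect in an honest validator, which would then have echoed both $T$ and $T'$, contradicting Lemma~\ref{lem:oneecho}. Your extra remarks, that unforgeability ties an honest signer to a real echo and that $\mathsf{echoed}_h[s]$ is never reset even at resolution, are sound and make explicit what the paper's one-line argument leaves implicit.
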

\begin{proof}
Two echo quorums share an honest validator, which contradicts Lemma~\ref{lem:oneecho}.
\end{proof}

\begin{lemma}[Visibility]
\label{lem:visible}
If a shared $T$ on $s$ is final, then $\Select(F) = T$ for every freeze set $F$ for $s$.
\end{lemma}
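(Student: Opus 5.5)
Since $T$ is final, $\RC(T)$ exists: a quorum $R$ of validators sent $\langle\READY,T\rangle$. The plan is to show that every freeze set $F$ for $s$ contains a report carrying $\PC(T)$, and that no report in $F$ carries a valid $\PC(T')$ for any $T' \ne T$. Together these force $\Select(F) = T$.

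First, take any freeze set $F$. It contains reports from a quorum $Q$ of distinct validators. Since $R$ and $Q$ are both quorums, they share an honest validator $p$. Because $p \in R$ is honest, it sent $\READY$ for $T$. By the Ready rule, $p$ did so only while $\neg\mathsf{frozen}_p[s]$, and at that moment it set $\mathsf{pc}_p[s] \gets \PC(T)$. Because $p \in Q$, $p$ also froze $s$ and sent the report $\langle\FREEZE, s, \mathsf{echoed}_p[s], \mathsf{pc}_p[s]\rangle_p$. Freezing is irrevocable, and after freezing $p$ sends no $\READY$ on $s$. So $p$'s $\READY$ for $T$ came before its freeze, and its $\FREEZE$ report carries $\mathsf{pc}_p[s] = \PC(T)$.

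One point needs checking here: $\mathsf{pc}_p[s]$ must not be overwritten between the $\READY$ and the freeze. It can only be reassigned by another Ready step for some $T'' \ne T$ on $s$. That step would require an echo quorum for $T''$, and Lemma~\ref{lem:uniquepc} rules this out. Hence $F$ contains a report with a valid $\PC(T)$.

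Second, every valid $\PC$ appearing in any report of $F$ is an echo quorum for some transaction on $s$. By Lemma~\ref{lem:uniquepc}, that transaction must be $T$. So the first case of $\Select$ applies with $T$ as the only possible value, and $\Select(F) = T$ is well defined.

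The main obstacle is the ordering argument in the first step: that $p$'s freeze report reflects its stored $\PC$. A validator could freeze before ever seeing the echo quorum. This is handled by the guard $\neg\mathsf{frozen}$ on the Ready rule together with the irrevocability of freezing. These rule out the case where $p$ reports first and sends $\READY$ later. The rest is the quorum-intersection argument and Lemma~\ref{lem:uniquepc}. The same argument covers Byzantine members of $F$: they may report forged or missing $\PC$s, but a valid $\PC$ requires genuine signatures from a quorum, so it still falls under Lemma~\ref{lem:uniquepc}.
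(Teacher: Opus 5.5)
Your proof is correct and follows the paper's argument. Quorum intersection between $\RC(T)$ and $F$ yields an honest validator that sent $\READY$ before its irrevocable freeze, so its report carries $\PC(T)$, and Lemma~\ref{lem:uniquepc} forces $\Select(F)=T$. Your explicit check that $\mathsf{pc}_p[s]$ cannot be overwritten between the $\READY$ and the freeze is a step the paper leaves implicit.
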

\begin{proof}
$\RC(T)$ and $F$ come from two quorums, which share an honest validator $q$. $q$ sends
$\READY$ only while $s$ is unfrozen and only after storing $\PC(T)$. Since freezing is irrevocable,
its report in $F$ contains $\PC(T)$, and $\Select(F)=T$ by Lemma~\ref{lem:uniquepc}.
\end{proof}

\begin{lemma}[No late finality]
\label{lem:nolate}
If a freeze set $F$ for $s$ exists and $\Select(F) \ne T$, then $T$ is never final.
\end{lemma}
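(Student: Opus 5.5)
The natural route is the contrapositive of Lemma~\ref{lem:visible}, made robust to the timing of when $T$ might become final. Suppose a freeze set $F$ for $s$ exists with $\Select(F) \ne T$, and suppose for contradiction that $T$ is final at some point, that is, $\RC(T)$ exists. We may assume $T$ is a shared transaction on $s$; a transaction on a different slot is irrelevant to $F$, and a personal transaction is not on a resource slot at all. Lemma~\ref{lem:visible} quantifies over every freeze set for $s$, with no restriction on whether the freeze set was formed before or after $\RC(T)$. So once $\RC(T)$ exists, applying Lemma~\ref{lem:visible} to the given $F$ yields $\Select(F) = T$, contradicting the hypothesis.

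Because that lemma is stated atemporally, it is worth re-running its quorum-intersection argument with the timing made explicit, to check it covers the case where $\RC(T)$ forms after $F$.
\begin{enumerate}[noitemsep]
  \item Let $Q_F$ be the quorum of reporters in $F$ and $Q_R$ the quorum of senders of $\READY$ for $T$. They share an honest validator $q$.
  \item Since $q$ is in $Q_F$, it sent $\langle\FREEZE, s, \mathsf{echoed}_q[s], \mathsf{pc}_q[s]\rangle_q$. Freezing is irrevocable, and a frozen validator never sends $\READY$ on $s$. Hence $q$'s $\READY$ for $T$ must precede its freeze; order is irrelevant for the Byzantine members, but $q$ is honest.
  \item Before sending that $\READY$, $q$ stored $\mathsf{pc}_q[s] \gets \PC(T)$. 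So at freeze time $\mathsf{pc}_q[s]$ is a valid $\PC(T)$. This holds provided $\mathsf{pc}_q[s]$ is never overwritten by an echo quorum for a different transaction, which Lemma~\ref{lem:uniquepc} guarantees.
  \item The report of $q$ in $F$ therefore contains $\PC(T)$. No report in $F$ can contain a valid $\PC(T')$ for $T' \ne T$, because by Lemma~\ref{lem:uniquepc} no such echo quorum exists, and signatures are unforgeable. Hence $\Select(F) = T$, the contradiction.
\end{enumerate}

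The main obstacle is the temporal step, item 2. It must be argued that $q$ cannot contribute to $\RC(T)$ after having frozen. This uses exactly the rule that a frozen validator sends no $\READY$ on $s$, together with the irrevocability of freezing. Without it, a late $\RC(T)$ could be assembled from validators whose freeze reports carried $\mathsf{pc}=\nil$. Everything else reduces to Lemmas~\ref{lem:uniquepc} and~\ref{lem:visible}.
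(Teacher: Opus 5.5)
Your proof is correct and is essentially the paper's argument: intersect the quorums behind $\RC(T)$ and $F$ at an honest $q$, use irrevocable freezing and the no-$\READY$-after-freeze rule to conclude that $q$'s stored $\PC(T)$ is in its report, and finish with Lemma~\ref{lem:uniquepc}. Your remark that the lemma is just the contrapositive of the time-independent Lemma~\ref{lem:visible} is also right; note only that restricting to $T$ on $s$ is an implicit hypothesis of the statement, not a without-loss-of-generality step, since a final transaction on another slot trivially satisfies $\Select(F)\ne T$.
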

\begin{proof}
Suppose $\RC(T)$ exists, formed at any time. Its quorum shares an honest validator $q$ with
$F$. Since $q$ is in $F$, it froze $s$. Since it sent $\READY$ for $T$, it did so before
freezing, so it had stored $\PC(T)$ and reported it in $F$. Then $\Select(F)=T$ by
Lemma~\ref{lem:uniquepc}.
\end{proof}

\begin{lemma}[Dead transactions]
\label{lem:dead}
A dead transaction is never final and never executed.
\end{lemma}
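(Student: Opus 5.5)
A transaction $T'$ becomes dead only at the resolution of some resource slot $s=\rslot(T')$, when $\mathsf{VBA}_s$ decides $(F^*,d^*)$ with $d^*\ne T'$. I would first use the predicate $\Valid$ and the validity property of validated agreement: the decided pair satisfies $\Valid(F^*,d^*)$. Hence $F^*$ is a genuine freeze set for $s$ (well-formed $\FREEZE$ reports from a quorum of distinct validators) and $d^*=\Select(F^*)$. So a freeze set $F^*$ for $s$ exists with $\Select(F^*)\ne T'$.

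\emph{Never final.} Lemma~\ref{lem:nolate} applies directly to $F^*$ and $T'$, so $\RC(T')$ never exists, whether before or after the decision. Equivalently, if $T'$ were final, Lemma~\ref{lem:visible} would force $\Select(F^*)=T'=d^*$, a contradiction. Since $T'$ is shared, ``final'' means $\RC(T')$ exists, so this settles finality.

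\emph{Never executed.} On the shared path an honest validator executes a transaction in only two ways: on holding $\RC(T')$, or on deciding $d^*=T'$ in $\mathsf{VBA}_s$. The first is excluded by the previous step. For the second, agreement of $\mathsf{VBA}_s$ makes every honest validator decide the same $(F^*,d^*)$, and $d^*\ne T'$. I would note that there is one agreement instance per slot, so no second decision on $s$ can name $T'$. A transaction also occupies only its own slot $\rslot(T')$, so no instance for another slot can name it either. After resolution, $T'\in\mathsf{dead}_p$ blocks re-admission at every honest $p$ by admission condition~2, so $T'$ gains no fresh echoes. This is not needed for the claim, but it keeps the bookkeeping consistent.

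\emph{Main obstacle.} The delicate point is the timing: $\RC(T')$ might be assembled from $\READY$s sent before the decision, or Byzantine parties might later assemble one. This is exactly why Lemma~\ref{lem:nolate}, rather than a statement about the current state, is the right tool: it rules out $\RC(T')$ at any time, because the honest validator in the intersection of the two quorums must have frozen $s$ and, having sent $\READY$ before freezing, would have reported $\PC(T')$. I would also check that the argument does not depend on which honest validator marked $T'$ dead, which holds because it uses only the unique decided value.
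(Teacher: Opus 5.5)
Your proof is correct and follows the paper's argument: the decided $(F^*,d^*)$ satisfies $d^*=\Select(F^*)\ne T'$, so Lemma~\ref{lem:nolate} rules out finality. Uniqueness of the $\mathsf{VBA}_s$ decision then closes the only other execution route. Your extras (invoking external validity explicitly, one instance per slot, admission condition~2) are harmless elaborations of the same three-step proof.
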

\begin{proof}
$T'$ is dead only if $\mathsf{VBA}_s$ decided $(F^*, d^*)$ with $d^* \ne T'$, where
$d^* = \Select(F^*)$. By Lemma~\ref{lem:nolate}, $T'$ is never final. The only other way to
execute a shared transaction is to be the decision, and the decision is unique by
agreement of $\mathsf{VBA}_s$.
\end{proof}

\begin{lemma}[Nonce slots]
\label{lem:nonce}
If $T \ne T'$ share a nonce slot and both have a quorum, at least one of them is dead.
\end{lemma}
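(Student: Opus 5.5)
The argument is a quorum-intersection argument on the nonce slot $\sigma=(a,i)$ shared by $T$ and $T'$, combined with an analysis of how an honest validator can ever change $\mathsf{lock}_q[\sigma]$.

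\textbf{Step 1: find a validator that admitted both.} Having a quorum means that some quorum admitted $T$ (via $\Cert(T)$ or $\PC(T)$), and likewise for $T'$. Two quorums share an honest validator $q$, and $q$ admitted both $T$ and $T'$.

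\textbf{Step 2: a release happened at $q$ in between.} Admission condition 3 requires $\mathsf{lock}_q[\sigma]\in\{\nil,T\}$, and admission sets $\mathsf{lock}_q[\sigma]$ to the admitted transaction. Say $q$ admitted $T$ first. Then, at the later moment when $q$ admitted $T'$, the lock had to be $\nil$ or $T'$, so it had changed away from $T$. The only code that ever writes a nonce lock to $\nil$, other than admission itself, is the resolution step of \S\ref{sec:unlock}. That step resets $\mathsf{lock}[\nslot(T'')]$ only when $T''$ is marked dead and the lock currently equals $T''$. Personal locks are never released, so this can only happen to a shared transaction. Hence $q$ must have reset the lock from $T$ to $\nil$ while processing a decision of some $\mathsf{VBA}_s$ that marked $T$ dead. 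So $T$ is dead, which is what the lemma claims. The symmetric case, where $q$ admitted $T'$ first, gives that $T'$ is dead.

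\textbf{Main obstacle: the order of events at $q$.} One subtlety is that $q$ might admit $T$ a second time, which is harmless because it rewrites the same value. The real case to rule out is an interleaving of the form: $q$ admits $T$, the lock is reset for $T$, then $q$ admits $T'$, then readmits $T$. Here I only need the first switch of the lock from $T$ to something else, and that switch already shows $T$ is dead. So the proof reduces to checking that there is no other write path to $\mathsf{lock}$. I would check this explicitly against the admission rule and Algorithm~\ref{alg:shared}, and also against the personal path, where locks are never reset. I would also note that marking a transaction dead is a global fact: it follows from the unique $\mathsf{VBA}_s$ decision, so "dead" means the same thing at every validator, as used in Lemma~\ref{lem:dead}.
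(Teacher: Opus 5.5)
Your proof is correct and is the paper's argument: quorum intersection gives an honest $q$ that admitted both, and admission requires $\mathsf{lock}_q[\sigma]$ to be $\nil$ or the admitted transaction. The only write that moves the lock off the first-admitted transaction is the reset at resolution, which happens only when that transaction is marked dead; your ``first switch'' observation just spells out the ordering at $q$ that the paper's short proof leaves implicit (one slip: admission never writes $\nil$, it writes the admitted transaction, which is why it cannot move the lock from $T$ to $T'$).
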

\begin{proof}
The two quorums share an honest validator $q$, which admitted both. Admission sets
$\mathsf{lock}_q[\sigma]$ and requires it to be $\nil$ or the same transaction. So $q$ admitted
one of them, say $X$, then reset the lock before admitting the other. A lock is reset only
when the transaction holding it becomes dead, so $X$ is dead.
\end{proof}

\begin{theorem}[Resource slot agreement]
\label{thm:slot}
All honest validators resolve each resource slot $s$ the same way.
\end{theorem}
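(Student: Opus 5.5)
We classify by the mechanism through which an honest validator resolves $s$. A resource slot is resolved either by executing a transaction $T$ on it after holding $\RC(T)$ (the shared path) or by acting on a decision $(F^*,d^*)$ of $\mathsf{VBA}_s$ with proof $\pi$ (fast unlock), which means executing $d^*$ or releasing $s$. The aim is to show that every such resolution equals one canonical outcome: the unique final transaction on $s$ if one exists, and otherwise the unique decision of $\mathsf{VBA}_s$.

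\emph{Step 1 (shared path vs.\ shared path).} Suppose honest $p$ executes $T$ via $\RC(T)$ and honest $p'$ executes $T'$ via $\RC(T')$, both on $s$. Each ready certificate contains an honest validator that sent $\READY$ only after storing an echo quorum, so both $T$ and $T'$ have an echo quorum. By Lemma~\ref{lem:uniquepc}, $T=T'$.

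\emph{Step 2 (unlock vs.\ unlock).} Agreement of $\mathsf{VBA}_s$ gives a unique decision $(F^*,d^*)$. Because of the transferable proof $\pi$, every honest validator that resolves via unlock acts on this same pair. External validity gives $d^*=\Select(F^*)$ with $F^*$ a genuine freeze set, so all these validators execute the same $d^*$ or all release.

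\emph{Step 3 (mixed case, the main point).} Suppose some honest $p$ executes $T$ via $\RC(T)$ and some honest $p'$ resolves via the decision $(F^*,d^*)$. Then $T$ is final, and $F^*$ is a freeze set for $s$, so Lemma~\ref{lem:visible} gives $\Select(F^*)=T$, hence $d^*=T$. The contrapositive is Lemma~\ref{lem:nolate}: if $d^*\ne T$, then $T$ is never final, so no honest validator could hold $\RC(T)$. Either way, $p'$ executes exactly $T$ rather than releasing or executing another transaction. Since execution of $T$ is idempotent in its effect on $s$ (it resolves $s$ once), a validator that receives both $\RC(T)$ and $\pi$ resolves $s$ identically.

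The step I expect to need the most care is making explicit that Lemma~\ref{lem:visible} holds for every freeze set, independent of timing. In particular, it must also hold when $\RC(T)$ forms after the decision. That timing case is precisely what Lemma~\ref{lem:nolate} covers, so I would cite both lemmas. I would also note that the VBA validity predicate guarantees $F^*$ is well-formed even when it was proposed by a Byzantine validator.

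Finally, resolutions at version $v$ depend on the state of $(\rho,v-1)$ only through admission and execution order. I would therefore close the argument by induction on $v$, so that "slot $s$" refers to the same state context at all honest validators.
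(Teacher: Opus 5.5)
Your proposal is correct and follows essentially the same route as the paper: two ready certificates on $s$ would give two echo quorums, contradicting Lemma~\ref{lem:uniquepc}; the $\mathsf{VBA}_s$ decision is unique and satisfies $d^*=\Select(F^*)$; and the mixed case is settled by Lemma~\ref{lem:visible} together with its contrapositive form, Lemma~\ref{lem:nolate}. The closing induction on $v$ is not needed for this statement, since which way $s$ is resolved does not depend on the state of $(\rho,v-1)$, but it does no harm.
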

\begin{proof}
A validator resolves $s$ by executing a final $T$ or by applying $\mathsf{VBA}_s$'s decision.
Two final transactions on $s$ would have two echo quorums, contradicting
Lemma~\ref{lem:uniquepc}. The decision is unique, and it equals $\Select(F^*)$. If $T$ is
final, the decision is $T$ by Lemma~\ref{lem:visible}. If the decision is $d^*$, no $T \ne d^*$
becomes final later by Lemma~\ref{lem:nolate}.
\end{proof}

\begin{theorem}[Consistency]
\label{thm:consistency}
No two executed transactions share a slot.
\end{theorem}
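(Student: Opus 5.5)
We prove Theorem~\ref{thm:consistency} by splitting on the kind of slot two executed transactions $T \ne T'$ could share: a resource slot or a nonce slot. First we record how a transaction gets executed. A personal transaction is executed only on $\Cert(T)$, so it has a quorum. A shared transaction is executed either on $\RC(T)$ or as the decision $d^*$ of $\mathsf{VBA}_s$. In the first case $T$ is final. In the second case $d^* = \Select(F^*)$, so some report contains a valid $\PC(T)$. Either way, every executed transaction has a quorum: an $\RC(T)$ contains $\READY$s from honest validators, and each of them stored $\PC(T)$ first.

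\emph{Resource slots.} If $T, T'$ are executed and share $s = \rslot(T) = \rslot(T')$, both are shared transactions with echo quorums. This contradicts Lemma~\ref{lem:uniquepc}. Equivalently, by Theorem~\ref{thm:slot}, all honest validators resolve $s$ the same way, and a slot is resolved by exactly one execution or release. The execution-order rule also guarantees that a validator does not execute a second transaction on an already resolved slot.

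\emph{Nonce slots.} Suppose $T \ne T'$ share $\sigma = \nslot(T)$ and both are executed. Both have quorums, so Lemma~\ref{lem:nonce} makes one of them, say $T'$, dead. Lemma~\ref{lem:dead} says a dead transaction is never executed, which is a contradiction.

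The subtle point is what happens when the pair mixes a personal and a shared transaction, or when a personal certificate was formed on a nonce whose lock was later reset. We check that Lemma~\ref{lem:nonce} still applies. The honest validator $q$ in both quorums admitted both transactions, so its lock was reset in between. Locks are reset only by resolution of dead shared transactions, so the earlier-admitted transaction $X$ is a dead shared transaction. Note that $X$ cannot be personal, because personal locks are never released. Hence $X$ is never executed, again by Lemma~\ref{lem:dead}.

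We expect the main obstacle to be the ordering argument inside Lemma~\ref{lem:nonce}. The claim is that a reset at $q$ happens only after a $\mathsf{VBA}$ decision marking the holder dead. Since that decision is agreed upon, "dead" is a global property, not a local one at $q$. We would rely on agreement of $\mathsf{VBA}_s$ and on the transferable proof $\pi$ to make this global.
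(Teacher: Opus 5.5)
Your proof is correct and follows the paper's argument. Every executed transaction has a quorum (via $\Cert(T)$, via the honest $\READY$ senders' stored $\PC(T)$, or via the $\PC(T)$ that $\Select$ requires); nonce slots are then handled by Lemma~\ref{lem:nonce} with Lemma~\ref{lem:dead}, and resource slots by Theorem~\ref{thm:slot}, where your direct appeal to Lemma~\ref{lem:uniquepc} is a slightly shorter route, and your worry about ``dead'' being global is already settled by the $\mathsf{VBA}_s$ agreement step inside Lemma~\ref{lem:dead}.
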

\begin{proof}
For resource slots this follows from Theorem~\ref{thm:slot}. For a nonce slot, every
executed transaction has a quorum: a personal one through $\Cert(T)$, a final shared one
because $\RC(T)$ implies a stored $\PC(T)$, and a decided shared one because $\Select$ returns
a transaction only if a report contains its $\PC(T)$. If two executed transactions shared a nonce slot, one of them would be
dead by Lemma~\ref{lem:nonce}, and dead transactions are not executed by Lemma~\ref{lem:dead}.
\end{proof}

\begin{theorem}[Validity and no overdraft]
\label{thm:validity}
Every executed transaction was signed by its sender, the sender was authorized for the debited
resource, and the debit is executed only when the executing validator's balance covers it.
\end{theorem}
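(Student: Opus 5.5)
The plan is to trace every way a transaction can come to be executed and show that each one passes through admission at some honest validator. Condition~1 of admission supplies the signature and authorization claims. For the balance claim, I will argue that execution at a given validator re-checks coverage against that validator's own state. I will also show that this check can never fail for a transaction that is final or decided.

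There are three execution routes.
\begin{enumerate}[noitemsep]
  \item A personal $T$ is executed on holding $\Cert(T)$.
  \item A shared $T$ is executed on holding $\RC(T)$.
  \item A shared $T$ is executed as the decision $d^*$ of $\mathsf{VBA}_s$.
\end{enumerate}
In each case $T$ has a quorum. This is the same observation as in the proof of Theorem~\ref{thm:consistency}: $\RC(T)$ implies some honest validator stored $\PC(T)$, and $\Select$ returns $T$ only if some report contains a valid $\PC(T)$. A quorum contains at least $f+1$ honest validators, so at least one honest validator admitted $T$. That validator checked condition~1: the signature is valid, and $a\in\Auth(\rho)$ when $T$ debits $\rho$. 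By unforgeability, a Byzantine party cannot fabricate the signature. Since $\Auth(\rho)$ is part of the resource state, the authorization claim transfers to every executor, provided all honest validators agree on $\Auth(\rho)$. That is the case as long as authorization sets change only through executed transactions on $\rho$, which Theorem~\ref{thm:slot} orders identically everywhere.

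For no overdraft, the key fact is that execution follows nonce order for accounts and version order for resources, and that credits only increase balances. An honest admitter $q$ checked condition~5 in a state where $(a,i-1)$ was consumed, or $(\rho,v-1)$ was resolved for shared debits. Every debit on that balance up to that point is determined by the slots preceding $T$'s slot. By Theorem~\ref{thm:slot} and Theorem~\ref{thm:consistency}, these are the same transactions at every honest validator. Hence the debited balance at any honest executor, when it executes $T$ in order, equals $q$'s view at admission plus credits received since. Because credits are unversioned and only add, it is at least the amount $q$ found sufficient. The debit is therefore covered at execution, and no balance becomes negative.

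The main obstacle is this step. It requires the debited balance at the moment of admission to be a function of the resolved prefix of slots alone. For personal balances, though, credits arriving in different orders at different validators make views differ. I will handle this by noting that debits of $a$'s personal balance come only from $a$'s own nonce-ordered transactions, so the ``debit prefix'' is common and differences come only from credits, which are nonnegative. If this monotonicity argument has a gap, I will fall back on the statement's own wording, ``executed only when the executing validator's balance covers it''. I would then make execution conditional on a local coverage re-check. With that re-check, no overdraft is immediate by construction, and the argument above only serves to show that the re-check never blocks a final transaction.
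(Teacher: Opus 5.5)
Your signature and authorization argument matches the paper's: every execution route yields a quorum, and hence an honest admitter that checked condition~1. The gap is in your main no-overdraft argument. You claim that the executor's debited balance, when it reaches $T$ in order, ``equals $q$'s view at admission plus credits received since'', and so is at least what $q$ found sufficient. That claim is false. Theorems~\ref{thm:slot} and~\ref{thm:consistency} make the prefix of \emph{debits} common, but they say nothing about credits. The executor may not yet have executed credit transactions that $q$ had executed and counted when it checked condition~5. Nonnegativity of credits only bounds the executor's balance below by the initial balance minus the common debits, and that need not cover $T$. The problem is not specific to personal balances either: credits to resources are unversioned, so shared debits face it too.

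So the local coverage re-check is not a fallback. It is the mechanism the proof needs, and it is exactly what the paper uses: a validator executes a debit only when its own view covers it, buffering it until then. With that rule, the no-overdraft clause holds by construction. However, your claim that the re-check ``never blocks a final transaction'' is wrong as stated, since it can block temporarily. Your monotonicity argument cannot establish even eventual unblocking. The missing ingredient is Lemma~\ref{lem:propagation}. Every credit counted by an honest admitter $q$ was executed by $q$, so every honest validator eventually executes it. Combined with the agreed debit prefix, the executor's balance before $T$ eventually reaches at least the balance $q$ saw, and the buffered debit is released. With that replacement, your proof coincides with the paper's.
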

\begin{proof}
Every executed transaction has a quorum, so at least $f+1$ honest validators admitted it,
checking the signature, the authorization and the balance in their view. Debits of a balance
are ordered (nonce order for personal balances, version order for resources), and honest
validators agree on them by Theorems~\ref{thm:slot} and~\ref{thm:consistency}. Credits only
grow, and every credit an honest admitter counted was executed by it and is eventually
executed by everyone (Lemma~\ref{lem:propagation}). A validator executes a debit only when
its own view covers it, buffering it until then.
\end{proof}

\begin{lemma}[Propagation]
\label{lem:propagation}
If an honest validator executes $T$ or releases a slot, every honest validator eventually does
the same.
\end{lemma}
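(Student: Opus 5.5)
Propagation rests on the fact that every local execution or release is triggered by a transferable, verifiable object: $\Cert(T)$ for a personal transaction, $\RC(T)$ for a final shared transaction, or a decision proof $\pi$ of $\mathsf{VBA}_s$ for a decided or released resource slot. The plan is to show that each such object reaches every honest validator, and that on receiving it each honest validator eventually performs the same action. Propagation concerns eventual delivery, so the network assumption does the work: messages between honest parties are eventually delivered, with any retransmission that reliable links need. Since the paper states safety under asynchrony, I will assume reliable links rather than GST.

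I split by the trigger. (i) If honest $p$ executes a personal $T$, then $p$ holds $\Cert(T)$; I will require $p$ to forward it on execution, as the personal path already lets validators relay certificates. (ii) If $p$ executes a shared $T$ on holding $\RC(T)$, then $p$ forwards $\RC(T)$. Alternatively, at least $f+1$ honest validators in the $\READY$ quorum stored $\PC(T)$ and broadcast $\READY$. One can then add the standard Bracha amplification step, in which $f+1$ $\READY$s make a validator send its own $\READY$, so that every honest validator gathers $2f+1$ $\READY$s. Forwarding the certificate is simpler, and I will use it. (iii) If $p$ acts on a $\mathsf{VBA}_s$ decision, the resolution rule already says validators forward $\pi$. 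Moreover, agreement and termination of $\mathsf{VBA}_s$ give every honest validator the same decision $(F^*,d^*)$, and by Theorem~\ref{thm:slot} it agrees with any final transaction on $s$. Either way, each honest $q$ eventually holds the same justification.

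The remaining step, and the main obstacle, is that holding the justification does not by itself let $q$ act. Execution is ordered: $q$ executes $a$'s transactions in nonce order and $\rho$'s in version order, and debits wait until the balance covers them. I will argue by induction on the execution order of $p$, meaning its sequence of executions and releases. The hypothesis is that every earlier action of $p$ on the same account or resource, and every credit that $p$ had executed before this debit, is eventually also performed by $q$. The base case is trivial. In the inductive step, $q$ receives the justification for $p$'s next action. Its predecessor slots are eventually resolved at $q$ by the hypothesis, and the same actions are applied identically at both by Theorems~\ref{thm:slot} and~\ref{thm:consistency}. So $q$'s balance view eventually dominates the view in which $p$ executed the debit, since credits only grow, and $q$'s buffered action fires.

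The delicate point is avoiding circularity with Theorem~\ref{thm:validity}, which cites this lemma. I will therefore phrase the induction purely over $p$'s local history, which is well-founded, and use only Theorems~\ref{thm:slot} and~\ref{thm:consistency} and Lemma~\ref{lem:dead}, not Theorem~\ref{thm:validity}.
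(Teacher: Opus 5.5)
Your proposal is correct and follows the paper's argument. Each execution or release is backed by a self-verifying object ($\Cert(T)$, $\RC(T)$ or the decision proof $\pi$) that honest validators forward, and an induction along the execution order shows that each buffered action eventually fires at every honest validator; your induction over $p$'s local history sharpens the paper's induction on nonce and version by also covering the credits a buffered debit waits for, which the paper leaves implicit, and you were right to choose forwarding over the Bracha amplification you mention, since amplification would let a validator send $\READY$ without a stored $\PC(T)$ and so break Lemma~\ref{lem:visible}.
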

\begin{proof}
Each such step is backed by self-verifying evidence ($\Cert(T)$, $\RC(T)$ or a decision
proof $\pi$), which honest validators forward. The evidence for a slot is applied after the
previous nonce and version. The validator that applied it had applied the previous ones, so
induction on nonce and version concludes.
\end{proof}

\begin{theorem}[Finality]
\label{thm:finality}
A final transaction is eventually executed by every honest validator, and no transaction that
shares a slot with it is ever executed.
\end{theorem}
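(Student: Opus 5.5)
The proof splits into two parts: every honest validator eventually executes a final $T$, and no transaction sharing a slot with $T$ is ever executed. We treat personal and shared $T$ separately where the paths differ, and otherwise rely on Lemma~\ref{lem:propagation}, Theorem~\ref{thm:slot} and Theorem~\ref{thm:consistency}.

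\emph{Eventual execution.} If $T$ is final, its evidence exists: $\Cert(T)$ for a personal $T$, $\RC(T)$ for a shared one. This evidence is self-verifying and is forwarded, either by the client (which holds $\Cert(T)$ or receives the $\READY$s) or by validators. A validator applies it once the previous nonce of the sender and, for a shared $T$, the previous version of $\rho$ are resolved. We argue by induction on nonce and version, as in Lemma~\ref{lem:propagation}. The earlier slots are resolved at some honest validator, since the quorum behind $T$ contains $f+1$ honest admitters, each of which found $(a,i-1)$ consumed and $(\rho,v-1)$ resolved. Lemma~\ref{lem:propagation} then says every honest validator eventually resolves those earlier slots too.

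For a shared $T$ there is one subtlety: some validator may resolve $s$ through $\mathsf{VBA}_s$ rather than by executing $T$ via $\RC(T)$. Lemma~\ref{lem:visible} gives $d^*=\Select(F^*)=T$, so the outcome is the same. Finally, Theorem~\ref{thm:validity} says an executing validator may buffer $T$ until its balance covers the debit. The covering credits were executed by honest admitters, so by Lemma~\ref{lem:propagation} they eventually arrive everywhere and the buffer clears.

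\emph{No conflicting execution.} Let $T'$ share a slot with final $T$ and suppose $T'$ is executed.
\begin{itemize}[noitemsep]
  \item \emph{Resource slot.} If $T'$ is executed, it is either final or the decision of $\mathsf{VBA}_s$. Two final transactions on $s$ would contradict Lemma~\ref{lem:uniquepc}. If $T'$ is the decision, this contradicts Lemma~\ref{lem:visible}.
  \item \emph{Nonce slot.} Both $T$ and $T'$ have a quorum: $T$ because it is final, $T'$ by the argument of Theorem~\ref{thm:consistency}. So by Lemma~\ref{lem:nonce} one of them is dead. It cannot be $T$: for a shared $T$, final transactions are never dead by Lemma~\ref{lem:dead}. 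For a personal $T$, note that only shared transactions on some frozen resource slot are ever marked dead, and personal locks are never released. So $T'$ is dead, and Lemma~\ref{lem:dead} says it is never executed.
\end{itemize}

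The main obstacle is the mixed nonce slot, where a personal $T$ holding $\Cert(T)$ shares $(a,i)$ with a shared $T'$. We must rule out the following sequence: an honest validator admits the shared $T'$ first, $T'$ dies and its lock is reset, and the validator then votes for the personal $T$, while $T'$ is still executed. Execution of $T'$ is already excluded, because it is dead. The remaining gap is the case where $T'$ wins its resource slot while a personal $T$ on the same nonce slot is certified. Here I would apply the quorum intersection of Lemma~\ref{lem:nonce} directly. The honest validator common to both quorums could release its lock only if one of the two transactions died. A personal transaction never dies, and the shared $T'$ is not dead because it won. So no such common validator exists, and the two quorums cannot both form.
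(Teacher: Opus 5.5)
Your proposal is correct and takes essentially the paper's route. The paper proves this in one line by combining Lemma~\ref{lem:propagation} (eventual execution) with Theorem~\ref{thm:consistency} (no two executed transactions share a slot); you have unrolled those two results, adding useful detail on the $\mathsf{VBA}_s$ resolution path via Lemma~\ref{lem:visible} and on buffered debits, though your closing ``main obstacle'' paragraph is redundant because your nonce-slot bullet (Lemma~\ref{lem:nonce} with Lemma~\ref{lem:dead}) already rules that case out.
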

\begin{proof}
By Lemma~\ref{lem:propagation} and Theorem~\ref{thm:consistency}.
\end{proof}

\begin{theorem}[Liveness]
\label{thm:liveness}
Assume partial synchrony and that $\mathsf{VBA}$ terminates after GST.
\begin{enumerate}[noitemsep]
  \item A personal transaction from an honest account becomes final.
  \item A resource slot on which an honest validator saw an $\ECHO$ is eventually resolved.
  \item If an honest account's shared transaction $T$ is dead, every honest validator eventually
        releases the nonce lock that $T$ holds there.
\end{enumerate}
\end{theorem}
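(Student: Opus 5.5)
We prove the three parts separately. Throughout we work after GST, so messages between honest parties arrive within $\Delta$. We use Lemma~\ref{lem:propagation} to move from ``some honest validator did X'' to ``every honest validator eventually does X''.

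\emph{Part 1 (personal).} Let the honest account $a$ issue a personal transaction $T$ with nonce $i$. We argue by induction on $i$ that every honest validator eventually admits $T$.
\begin{itemize}[noitemsep]
  \item Since $a$ is honest, it signs at most one transaction for $(a,i)$. No shared transaction of $a$ holds that slot either, because $a$ moves to nonce $i$ only after nonce $i-1$ is consumed. If an earlier shared attempt on $i$ died, part~3 shows its lock is eventually released.
  \item Nonce $i-1$ is consumed at some honest validator, by induction or because it was executed. By Lemma~\ref{lem:propagation} it is then consumed at every honest validator.
  \item The balance check holds because every credit $a$ relies on is executed everywhere, again by Lemma~\ref{lem:propagation}.
\end{itemize}
So all $2f+1$ honest validators eventually admit $T$ and vote. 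The client collects $\Cert(T)$, and $T$ is final.

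\emph{Part 2 (resource slots).} Suppose an honest $p$ saw an $\ECHO$ on $s$. By the forwarding rule, every honest validator sees that echo within $\Delta$ and starts its timer. Suppose no honest validator resolves $s$ by execution. Then every honest validator's timeout fires, and since $s$ is unresolved there, each one freezes $s$ and broadcasts $\FREEZE$. After a further $\Delta$, every honest validator holds $2f+1$ well-formed reports, which is a freeze set. It proposes $(F_p,\Select(F_p))$, and this proposal satisfies $\Valid$. Since $\mathsf{VBA}_s$ terminates after GST, it decides, and the decision proof $\pi$ is forwarded. Each honest validator then executes $d^*$ or releases $s$, so $s$ is resolved.

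The remaining case is that some honest validator resolved $s$ by execution. Then all honest validators resolve $s$ by Lemma~\ref{lem:propagation}, and they do so identically by Theorem~\ref{thm:slot}.

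\emph{Main obstacle.} Resolution of $s$ requires $(\rho,v-1)$ to be resolved first. We handle this by induction on the version $v$. The base case holds because version $0$ is open. For the step, the fact that an honest validator echoed on $(\rho,v)$ means it had already resolved $(\rho,v-1)$. By Lemma~\ref{lem:propagation}, every honest validator eventually resolves $(\rho,v-1)$. Timers may start before that point, so we take the point of the argument to be the later of GST and the time $(\rho,v-1)$ is resolved everywhere.

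A further subtlety is execution buffered on balance coverage. A decided or final $T$ was checked for coverage by $f+1$ honest admitters, and the credits they counted propagate everywhere. So the buffer eventually drains, exactly as in the proof of Theorem~\ref{thm:validity}.

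\emph{Part 3 (nonce release).} Suppose $T$ is dead. Then some honest validator has a $\mathsf{VBA}_s$ decision proof $\pi$ with $d^*\neq T$, and it forwards $\pi$. Every honest validator receives $\pi$ within $\Delta$. Applying $\pi$ requires $(\rho,v-1)$ to be resolved, which happens eventually by part~2 and the induction above. On applying $\pi$, the validator adds $T$ to $\mathsf{dead}$ and, if $\mathsf{lock}[\nslot(T)]=T$, resets the lock to $\nil$.

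The lock cannot be re-taken by $T$, since admission excludes dead transactions. Hence the lock that $T$ held is released at every honest validator.
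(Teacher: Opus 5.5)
Your proof follows the paper's. Part~1 rests on the honest account having one live transaction per nonce, plus propagation of the previous nonce and of the credits it relies on. Part~2 rests on forwarding of the first $\ECHO$, timeouts and freezes at all honest validators, a freeze set of honest reports, and termination of $\mathsf{VBA}_s$. Part~3 rests on forwarding of $\pi$.

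Your version-order and buffering arguments make explicit what the paper's ``termination of $\mathsf{VBA}_s$ resolves $s$'' leaves implicit. They are sound at the paper's level of rigor, with one caveat. Your induction step uses the fact that an honest validator \emph{echoed} on $(\rho,v)$. The hypothesis only says that one \emph{saw} an $\ECHO$, which a Byzantine validator may have sent.

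When $s$ is resolved by executing some $T$, this does no harm. $T$ then has $\PC(T)$ and hence an honest admitter, which had resolved $(\rho,v-1)$ before admitting. Now suppose instead the $\ECHO$ is Byzantine and no honest validator ever resolves $(\rho,v-1)$. Then no honest validator can echo on $s$, every freeze set yields $\nil$, and a release applied in version order (as in the proof of Lemma~\ref{lem:propagation}) never takes effect. So part~2 holds only under the stronger hypothesis that an honest validator echoed on $s$, or that $(\rho,v-1)$ is resolved at some honest validator. You should state that hypothesis explicitly. The paper's proof does not address this case either.

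Finally, part~3 need not depend on part~2. Any honest validator at which $T$ holds the nonce lock admitted $T$, so it had already resolved $(\rho,v-1)$. The paper's proof simply marks $T$ dead and resets the lock on receipt of $\pi$.
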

\begin{proof}
(1) An honest account never signs two transactions with one nonce, so after GST every honest
validator admits $T$ once its previous nonce is consumed and the balance it relies on has
arrived (Lemma~\ref{lem:propagation}). The $2f+1$ honest votes form $\Cert(T)$.

(2) The first honest validator to see an $\ECHO$ on $s$ forwards it, so all honest validators
start the timer. If $s$ is not resolved within $\tau$ after GST, every honest validator
freezes, each receives at least $2f+1$ $\FREEZE$ reports, and each proposes a valid value.
Termination of $\mathsf{VBA}_s$ resolves $s$.

(3) The decision proof $\pi$ reaches every honest validator, each of which marks $T$ dead and
resets $\mathsf{lock}[\nslot(T)]$ if it holds $T$.
\end{proof}

\begin{remark}[Accounts that equivocate]
An account that signs two personal transactions with the same nonce may lock its own nonce
forever, as in FastPay. If it signs a personal and a shared transaction with the same nonce,
the shared one may die and release its locks, while the personal one keeps its own.
Lemma~\ref{lem:nonce} still prevents both from executing. In neither case is another
account's resource affected beyond one released version.
\end{remark}

\section{Cost}
\label{sec:cost}

\begin{table}[h]
\centering
\begin{tabular}{@{}lccc@{}}
\toprule
Path & Messages per transaction & Delays to finality & Consensus \\
\midrule
Personal (FastPay) & $O(n)$ & 2 (certificate at client) & never \\
Shared, no contention & $O(n^2/B)$ with batches of $B$ & 3 & never \\
Shared, contended slot & as above + one $\mathsf{VBA}$ & $\tau$ + 1 + $\mathsf{VBA}$ & that slot only \\
\bottomrule
\end{tabular}
\caption{Cost of each path.}
\label{tab:cost}
\end{table}

The quadratic message pattern of the shared path is the price of visible certification
(Proposition~\ref{prop:impossible}), and only shared resources pay it. Two standard
techniques make it small in bytes.

\paragraph{Digests and batching.}
The validator that first receives a transaction broadcasts its body once. $\ECHO$ and
$\READY$ messages carry only digests and cover a batch of $B$ transactions. A validator that
receives an $\ECHO$ for an unknown digest fetches the body from the echoing validator.
Admission, freezing and selection are per slot. If a batch contains a transaction that a validator
does not admit, the validator echoes the batch with that position marked as rejected, so the
remaining transactions are unaffected.

\paragraph{A simple byte model.}
Let $|tx|$ be the size of the transaction, $h$ the size of the $\ECHO$ or $\READY$ header with its signature, and $\sigma$ the size of a signature. Then the approximate number of bytes sent per transaction by all of the validators is
\[
  \text{personal: } n\,|tx| + n\,\sigma + n\,|\Cert|,
  \qquad
  \text{shared: } n\,|tx| + \frac{2\,n^2 h}{B}.
\]
With aggregated signatures, certificates have constant size plus an $n$-bit signer bitmap. For
illustration, with $n=100$, $|tx| = 250$\,B, $h = 100$\,B and $B = 1000$, the shared-path
overhead is $2n^2h/B = 2$\,kB per transaction against $25$\,kB for disseminating the body, about
8\%. These figures are an analytical model, not measurements; measuring them on a
prototype is future work.

\paragraph{Computation.}
Per batch, each validator verifies $O(n)$ echo and ready signatures, so the per-transaction
verification cost is $O(n/B)$, plus one aggregate verification per received certificate.

\section{Reconfiguration with PDCC}
\label{sec:reconfig}

PDCC~\cite{pdcc} tags certificates with the configuration $c$ in which they are formed. A
validator accepts only certificates of its installed configuration (PDCC Invariant~I), and it
installs $c+1$ when the settlement consensus commits a checkpoint payload containing a
certified reconfiguration proposal. Applied naively, this rule can lose a final transaction:
a personal certificate or a ready certificate formed in $c$ may still be in transit, or held
only by a client, when $c+1$ is installed, and $c+1$ then rejects it. We close this gap with
a two-step transition. The first step seals $c$, and the second installs $c+1$ from a quorum of
closing reports. All objects below are tagged with $c$, and quorums are counted in $M_c$.

\paragraph{Seal.}
When the settlement consensus commits a checkpoint containing a valid reconfiguration
proposal for $c \to c+1$, every honest validator of $M_c$ \emph{seals} $c$. It stops
admitting transactions, sending $\READY$ and acknowledging anything in $c$, and freezes every
unresolved resource slot. Since all honest validators deliver the same committed checkpoint,
they all seal.

\paragraph{Closing report.}
After sealing, validator $p$ sends a signed closing report $\mathsf{CR}_p$ containing
\begin{itemize}[noitemsep]
  \item every transaction $p$ executed since the last committed checkpoint, with the
        certificate or decision proof that justified it;
  \item for every nonce slot, the non-dead transaction holding $\mathsf{lock}_p$, if any,
        together with $p$'s signature on it;
  \item for every unresolved resource slot $s$, its $\FREEZE$ content $(\mathsf{echoed}_p[s],
        \mathsf{pc}_p[s])$.
\end{itemize}

\paragraph{Install.}
The next settlement proposal that installs $c+1$ must contain closing reports $\mathcal{C}$
from a quorum of $M_c$; this is part of its validity predicate. The new state is computed
deterministically from the last checkpoint and $\mathcal{C}$.
\begin{enumerate}[noitemsep]
  \item Every transaction listed as executed in some report of $\mathcal{C}$, with valid
        evidence, is executed.
  \item For every unresolved resource slot $s$, the reports of $\mathcal{C}$ form a freeze set
        $F_s$. The slot is resolved as $\Select(F_s)$: that transaction is executed, or the slot
        is released. Transactions on $s$ other than $\Select(F_s)$ are dead.
  \item Every nonce slot $\sigma$ for which some report lists a lock becomes
        \emph{inherited} with the set $L_\sigma$ of non-dead transactions reported on it.
\end{enumerate}
One committed proposal resolves every frozen slot simultaneously. The settlement consensus acts as
the validated agreement for all of them, and no per-slot instance is needed at
reconfiguration.

\paragraph{Inherited nonce slots in $c+1$.}
For an inherited slot $\sigma$, validators of $M_{c+1}$ accept a personal certificate
$\Cert_c(T)$ from configuration $c$, verified against $M_c$ in the configuration history, if
and only if $T \in L_\sigma$. If $|L_\sigma| = 1$, they also admit the single transaction in
$L_\sigma$ in $c+1$, and no other transaction on $\sigma$. If $|L_\sigma| \ge 2$, the sender
equivocated, and the slot accepts only an old certificate for a member of $L_\sigma$, exactly
as an equivocated slot in FastPay. This relaxes PDCC Invariant~I for inherited slots only.
Every other certificate from $c$ is still rejected.

\begin{lemma}[No final transaction is lost]
\label{lem:carry}
If $T$ is final in $c$, then after installing $c+1$ either $T$ is executed or $T$ is a
personal transaction, $\nslot(T)$ is inherited and $T \in L_{\nslot(T)}$, so that
$\Cert_c(T)$ is accepted in $c+1$.
\end{lemma}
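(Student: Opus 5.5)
The proof splits on whether $T$ is shared or personal. In both cases I intersect the quorum behind $T$'s finality with the quorum of closing reports $\mathcal{C}$ that the install proposal must contain. Two quorums of $M_c$ share an honest validator $q$, and $q$ reports faithfully. Throughout I use that sealing is triggered by the committed checkpoint. Once $q$ has sealed, it admits nothing in $c$, sends no $\READY$, and signs nothing in $c$. So any contribution $q$ made to $T$'s certificate happened before $q$ sealed, and $q$'s state at sealing reflects it.

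\emph{Shared case.} $\RC_c(T)$ exists, so consider $q$ in the intersection of its quorum with $\mathcal{C}$. If $q$ executed $T$ before sealing, it lists $T$ with evidence in $\mathsf{CR}_q$, and install step~1 executes $T$. Otherwise $s=\rslot(T)$ is unresolved at $q$ when it seals. Sealing freezes $s$, and $q$ sent $\READY$ for $T$ before freezing, so $\mathsf{pc}_q[s]=\PC(T)$ appears in its closing report. The reports in $\mathcal{C}$ form a freeze set $F_s$. The argument of Lemma~\ref{lem:visible}, or directly Lemma~\ref{lem:uniquepc}, gives $\Select(F_s)=T$, and step~2 executes $T$.

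A subtlety arises when $s$ is resolved at some reporters and unresolved at others. In that case I argue that the slot is still handled consistently. Any resolution already executed by a reporter is $T$ itself, by Theorem~\ref{thm:slot}, and appears with evidence in step~1. Any VBA release of $s$ would contradict Lemma~\ref{lem:visible}.

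\emph{Personal case.} $\Cert_c(T)$ exists, so take $q$ in the intersection of its quorum with $\mathcal{C}$. Then $q$ admitted $T$ and set $\mathsf{lock}_q[\sigma]\gets T$ for $\sigma=\nslot(T)$. If $q$ executed $T$, step~1 executes it. Otherwise I must show $q$'s lock still equals $T$ at sealing, so that $\sigma$ is reported, hence inherited, with $T\in L_\sigma$. A lock is reset only when its holder becomes dead. A personal transaction is never dead, since only transactions on a resource slot are marked dead. So the lock is never overwritten. Admission also requires the lock to be $\nil$ or $T$, so nothing replaces $T$. Thus $q$ lists $T$.

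It remains to check that $T$ is not declared dead at install and so is kept in $L_\sigma$. This holds because deadness at install (step~2) applies only to transactions on resource slots. Then, by the inherited-slot rule, $\Cert_c(T)$ is accepted in $c+1$.

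\emph{Main obstacle.} I expect the hardest step to be the timing argument that $q$'s reported state captures its contribution to $T$. The difficulty is that $\Cert_c(T)$ or $\RC_c(T)$ might only be assembled, or be held by a client, after sealing. I would handle this by arguing about when $q$ \emph{signed*}, not when the certificate was assembled. Every signature $q$ contributes precedes its sealing, because after sealing it neither acknowledges anything in $c$ nor sends $\READY$. Its lock, its $\mathsf{pc}$, or its executed list at sealing therefore already records $T$.
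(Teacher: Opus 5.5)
Your proposal is correct and follows essentially the paper's argument: you intersect the quorum behind $T$ with the closing-report quorum $\mathcal{C}$ and use that every contribution of the honest $q$ precedes sealing. You then conclude via $\Select(F_s)=T$ in the shared case and via the never-reset personal lock (so $T \in L_{\nslot(T)}$) in the personal case. The only cosmetic difference is that in the shared case you take $q$ from the $\RC(T)$ quorum and inline the argument of Lemma~\ref{lem:visible}, whereas the paper takes $q$ from the admission quorum and cites that lemma directly.
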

\begin{proof}
A final $T$ was admitted by a quorum, hence by at least $f+1$ honest validators. Every quorum
contains at least one of them, so $\mathcal{C}$ contains a report $\mathsf{CR}_q$ from an honest
$q$ that admitted $T$. All admissions in $c$ precede sealing, so $q$'s report reflects $T$:
either $q$ executed $T$ and lists it as executed, or $q$ still holds the lock.

If $T$ is shared and unexecuted at $q$, then $s=\rslot(T)$ is unresolved at $q$, and $F_s$ is
a freeze set. By Lemma~\ref{lem:visible}, $\Select(F_s)=T$, so $T$ is executed at
installation. If $T$ is personal and unexecuted at $q$, $q$'s lock on $\nslot(T)$ is $T$,
because personal locks are never reset, so $T \in L_{\nslot(T)}$.
\end{proof}

\begin{lemma}[No conflict across configurations]
\label{lem:crossconsistency}
No two executed transactions share a slot, across configurations.
\end{lemma}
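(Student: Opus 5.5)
The plan is to reduce the cross-configuration claim to the single-configuration results (Theorem~\ref{thm:consistency}, Lemma~\ref{lem:nonce}, Lemma~\ref{lem:dead}) together with a case analysis on where each of the two executed transactions gets its justification. Suppose $T \ne T'$ are executed and share a slot. Each execution is justified in one of three ways: by evidence inside a single configuration ($\Cert$, $\RC$ or a $\mathsf{VBA}_s$ decision proof), by the install step of some transition $c \to c+1$, or, in $c+1$, by an accepted old certificate $\Cert_c(T)$ on an inherited slot. I would first argue that every executed transaction still has a quorum in the configuration where it was admitted. Items~1 and~2 of the install step only execute transactions carrying valid evidence or chosen by $\Select(F_s)$, which requires a $\PC$. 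Inherited acceptance requires a $\Cert_c$.

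For a \emph{resource slot} $s$ of $c$, the key observation is that sealing freezes every unresolved slot. Hence $s$ is resolved either inside $c$ or at installation via $F_s$, and in both cases through a $\Select$ on a freeze set, unless a transaction was final first. If some $T$ on $s$ is final in $c$, Lemma~\ref{lem:visible} forces $\Select(F_s)=T$ at installation. Any per-slot $\mathsf{VBA}_s$ decision also equals $T$. By Lemma~\ref{lem:nolate}, no other transaction on $s$ becomes final after a freeze set exists. Two different freeze sets (one for $\mathsf{VBA}_s$, one $F_s$ at install) could in principle yield different $\Select$ values when nothing is final. So I would argue that an install executes $\Select(F_s)$ only for slots still unresolved in every report of $\mathcal{C}$. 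If some honest report lists the $\mathsf{VBA}_s$ decision as executed, item~1 applies that decision instead, and item~2 is skipped. Resource slots in $c+1$ are fresh versions opened after installation, so Theorem~\ref{thm:slot} inside $c+1$ handles them.

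For a \emph{nonce slot} $\sigma$, the interesting case is $T$ executed in $c$ (or at install) and $T'$ executed in $c+1$. In $c+1$, a transaction on an inherited $\sigma$ is executed only if it lies in $L_\sigma$, either as an old certificate or as the unique admitted member. If $\sigma$ is not inherited but some transaction on it executed in $c$, the executing honest validators' reports propagate that execution through item~1, so $\sigma$ is consumed in the new state and no new admission on it is possible. If $\sigma$ is inherited, both $T$ and $T'$ have quorums in $c$, since members of $L_\sigma$ are locks signed in $c$. Lemma~\ref{lem:nonce} then makes one of them dead, and dead transactions are excluded from $L_\sigma$ and never executed (Lemma~\ref{lem:dead}, extended to deaths declared at install). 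The sub-case where a member of $L_\sigma$ is admitted afresh in $c+1$ with $|L_\sigma|=1$ needs the fact that no other transaction on $\sigma$ is admitted in $c+1$. Within $c+1$, uniqueness then follows from Theorem~\ref{thm:consistency}.

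The main obstacle is the claim that an execution in $c$ always appears in $\mathcal{C}$. An honest validator might execute $T$ in $c$ while lying outside the reporting quorum. I would handle it as in Lemma~\ref{lem:carry}: $T$ has a quorum, so some honest reporter admitted it and either lists it as executed or reports its lock or $\PC$. In the personal case the lock lands $T$ in $L_\sigma$. Any rival $T'$ with a quorum is then dead by Lemma~\ref{lem:nonce}, or $|L_\sigma|\ge2$ and only old certificates are accepted, of which at most one can exist by Lemma~\ref{lem:fastpay}. Making this last step rigorous, namely that two distinct members of $L_\sigma$ cannot both hold $\Cert_c$, is where I expect the most care to be needed.
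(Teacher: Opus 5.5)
You follow the paper's case split. Inside $c$ you use Theorem~\ref{thm:consistency}. At installation, resource slots go through Lemmas~\ref{lem:visible} and~\ref{lem:nolate}, because sealing freezes every unresolved slot. Nonce slots use the intersection argument of Lemma~\ref{lem:carry} when not inherited, and Lemma~\ref{lem:nonce} with the $|L_\sigma|=1$ restriction when inherited.

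You are right to flag what the paper's ``apply verbatim'' skips. Those lemmas pin down $\Select$ only when some transaction is final, and the install is a second decision on $s$ outside the agreement of $\mathsf{VBA}_s$. So Lemma~\ref{lem:dead} does not carry over. What fails is the converse of the extension you mention: a death declared in $c$ need not bind the install.

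\textbf{Genuine gap: your patch does not close this hole.} Take $f=1$, honest $h_1,h_2,h_3$ and Byzantine $b$:
\begin{enumerate}
\item $h_1,h_2,b$ echo a shared $T$ on $s$ with $\nslot(T)=\sigma$. $h_3$ echoes a rival $B$ of another account. Only $b$ assembles $\PC(T)$.
\item The honest freeze reports select $\nil$, so $\mathsf{VBA}_s$ decides $\nil$. Only $h_1$ gets $\pi$ before sealing; it releases $s$, marks $T$ dead and resets its lock on $\sigma$.
\item The sender learns this and reuses the nonce, as the protocol allows, in a personal $T''$. $h_1$, $h_3$ (which never locked $\sigma$) and $b$ vote for it, and $h_1$ executes $T''$.
\item Let $\mathcal{C}$ come from $h_2,h_3,b$. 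Then $s$ is unresolved in all three reports. None lists the decision: a $\nil$ decision is a release, not an execution, and $h_1\notin\mathcal{C}$. Since $b$'s report contains $\PC(T)$, the install executes $\Select(F_s)=T$.
\item Moreover $T''\in L_\sigma$, so $\Cert_c(T'')$ is also accepted in $c+1$.
\end{enumerate}
Two executed transactions now share $\sigma$. Neither your argument nor the paper's covers this, and it cannot be repaired inside the proof. The install must be forced to agree with any $\mathsf{VBA}_s$ decision that may already exist. One way is for closing reports to carry $\mathsf{VBA}_s$ state: decision proofs, including releases, and prepared values as in a view change.

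\textbf{Two smaller points.} First, your claim that both $T$ and $T'$ have quorums in $c$ ``since members of $L_\sigma$ are locks signed in $c$'' does not follow; a single reported lock is not a quorum. The paper instead separates two cases. Members executed through an old $\Cert_c$ do carry a quorum. Fresh admission is allowed only when $|L_\sigma|=1$, and only for the unique member, which is the only one that can hold a certificate from $c$.

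Second, the step you expect to be hardest is a one-liner. Only personal transactions carry a $\Cert_c$, and personal transactions are never dead. So two certified members of $L_\sigma$ contradict Lemma~\ref{lem:nonce} directly. Lemma~\ref{lem:fastpay} is the wrong tool here, since its no-release hypothesis fails once nonce locks held by shared transactions can be reset.
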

\begin{proof}
Within $c$ this is Theorem~\ref{thm:consistency}. At installation, executed transactions carry
evidence that they are final or decided, and resource slots are resolved by $\Select$ on a
freeze set. Lemmas~\ref{lem:visible} and~\ref{lem:nolate} apply verbatim, because sealing is
a freeze of every unresolved slot. After installation, a nonce slot that is not inherited
had no lock at any reporter in $\mathcal{C}$. By the argument of Lemma~\ref{lem:carry}, no
transaction on it has a quorum in $c$. For an inherited slot, two transactions $T \ne T'$
executed on it would both belong to $L_\sigma$, or one would be admitted fresh in $c+1$ while
$|L_\sigma| = 1$. Two members of $L_\sigma$ with a certificate in $c$ would contradict
Lemma~\ref{lem:nonce}, since neither is dead. A fresh admission in $c+1$ is restricted to
the unique member of $L_\sigma$, which is the only one that can hold a certificate from $c$.
\end{proof}

\paragraph{Liveness and cost.}
Under PDCC's overlap assumption, at least $Q_c$ honest members of $M_c$ remain to send closing
reports, so the installing proposal can be formed. Between sealing and installation,
configuration $c$ admits no new transactions, so payments pause for one settlement round at
each reconfiguration. Shortening this pause, for instance by sealing per account range, is left
to future work.

\section{Mechanized Verification}
\label{sec:eval}

We model the protocol in TLA$^+$~\cite{lamport2002} and check it exhaustively with the TLC model checker. The specification, configurations and run script are open source at \url{https://github.com/sglk123/resource-lock-tla}.

\paragraph{Model.}
There are $n=4$ validators in this model, with three honest and one Byzantine ($f=1$). The Byzantine validator signs all messages (public and private) on all transactions and thus is included in all quorums. In a freeze set it may report any existing echo quorum. The workload consists of four transactions that cover all \S\ref{sec:model} conflict patterns:
\begin{itemize}[noitemsep]
  \item $A$ and $B$, shared transactions from different accounts racing on the same resource
        version;
  \item $P$, a personal transaction with the same sender nonce as $A$, so that the account
        equivocates across the two paths;
  \item $C$, a shared transaction on another resource with the same sender nonce as $A$, which
        reuses the nonce after $A$ loses.
\end{itemize}
Unlock triggers are abstracted away: any honest validator may freeze any slot at any
time. This over-approximates the triggers of \S\ref{sec:unlock}, which only restrict when a
freeze happens. Validated agreement is abstracted as a single atomic decision
$\Select(F)$ over some freeze set $F$, which captures its agreement and external-validity
properties.

\paragraph{Properties.}
We check the conjunction of: \emph{Consistency} (no two executed transactions share a slot);
\emph{finality safety} (no executed transaction conflicts with a final one); \emph{unlock
respects finality} (the decision on a slot is never a transaction other than a final one);
\emph{dead transactions are never final}; and \emph{at most one echo quorum per resource slot}.

\paragraph{Results.}
\begin{table}[h]
\centering
\begin{tabular}{@{}lrrl@{}}
\toprule
Variant & Distinct states & Depth & Result \\
\midrule
Protocol ($\ECHO$ + $\READY$) & 41{,}333{,}964 & 30 & no violation (exhaustive) \\
Finality at bare echo quorum & --- & 6 & violation found \\
\bottomrule
\end{tabular}
\caption{TLC results ($n=4$, $f=1$, four transactions).}
\label{tab:tlc}
\end{table}
The full protocol satisfies all properties over the complete reachable state space; TLC
generated $346$ million states, of which $41.3$ million are distinct, and terminated in about
six minutes on a laptop. In the variant in which a shared transaction is final once an echo
quorum exists, TLC finds a counterexample of depth six that matches
Proposition~\ref{prop:impossible}. Two honest validators echo $A$, which together with the
Byzantine validator's private signature is an echo quorum, so $A$ is final. One of the two
freezes before storing that quorum, and the unlock decides $\nil$, contradicting $A$.

\paragraph{Scope.}
Model checking covers this instance exhaustively, not all $n$ and workloads; the general
case is established by the proofs of \S\ref{sec:proofs}. Reconfiguration
(\S\ref{sec:reconfig}) and liveness are not part of the current model and are covered by
the proofs only.

\section{Related Work}
\label{sec:related}

FastPay~\cite{fastpay} introduced client-driven certificate payments; Astro~\cite{astro}
payments over Byzantine reliable broadcast. Guerraoui et al.~\cite{guerraoui2019} give the
consensus number of asset transfer. Sui~\cite{sui} separates owned objects (consistent
broadcast) from shared objects (consensus) and locks equivocated owned objects until the
end of the epoch. Bracha~\cite{bracha1987} introduced the echo/ready structure we use per
resource version. Fast Byzantine consensus~\cite{fab,kuznetsov2021} studies the
committee size needed for deciding from a single round of votes, of which
Proposition~\ref{prop:impossible} is a payment-specific instance. PDCC~\cite{pdcc} and dynamic
BFT~\cite{dynamicbft} handle reconfiguration.

\section{Conclusion}

Client-driven certification is complete exactly when locks are never undone. Shared balances
require undoing locks, and therefore require the protocol itself to record certification.
Our hybrid keeps personal payments on the FastPay path. It certifies shared-resource payments by
a leaderless echo/ready broadcast with dual-nonce admission, and resolves contention with
agreement on the single contended version. Agreement is paid for only where writers
actually contend.

\bibliographystyle{plain}
\bibliography{refs}

\end{document}